\documentclass[letterpaper, 10pt, conference]{ieeeconf}

\IEEEoverridecommandlockouts
\renewcommand{\baselinestretch}{.975}
\usepackage{amsmath, amssymb, amsfonts, mathtools} 
\usepackage{graphicx}
\usepackage{xcolor}
\usepackage{hyperref}
\hypersetup{colorlinks}
\usepackage[capitalize]{cleveref}
\usepackage{cite}
\usepackage{float}
\usepackage{subcaption}
\usepackage[normalem]{ulem}
\usepackage{xcolor}
\definecolor{darkgreen}{RGB}{0,130,0}
\definecolor{myorange}{HTML}{FFAB6B}
\usepackage[table]{xcolor}
\usepackage{booktabs}
\usepackage{algorithm}
\usepackage{algorithmic}
\usepackage{amsmath,amssymb}
\usepackage{bm}
\usepackage{soul}

\newtheorem{prop}{Proposition}
\newtheorem{lem}{Lemma}

\newtheorem{assume}{Assumption}
\newtheorem{rem}{Remark}

\usepackage{subcaption}
\def \mb{\mathbf}
\def \bb{\mathbb}

\def \ms{\mathsf}
\def \mf{\mathfrak}
\def \mc{\mathcal}

\def \ol{\overline}
\def \wt{\widetilde}

\newcommand\eqdef\coloneqq
\usepackage{stfloats}

\title{\LARGE \bf
    HyperKKL: Learning KKL Observers for Non-Autonomous Nonlinear Systems via Hypernetwork-Based Input Conditioning
    }

\author{Yahia Salaheldin Shaaban$^{1}$, Abdelrahman Sayed Sayed$^{2}$, M. Umar B. Niazi$^{3}$, and Karl Henrik Johansson$^{3}$%
\thanks{$^{1}$Mohamed bin Zayed University of Artificial Intelligence (MBZUAI), Abu Dhabi, UAE.
        Email: {\tt\small yahia.shaaban@mbzuai.ac.ae}}%
\thanks{$^{2}$Univ Gustave Eiffel, COSYS-ESTAS, F-59657 Villeneuve d'Ascq, France.
        Email: {\tt\small abdelrahman.ibrahim@univ-eiffel.fr}}%
\thanks{$^{3}$Department of Decision and Control Systems and Digital Futures, KTH Royal Institute of Technology, SE-100 44 Stockholm, Sweden.
        Email: {\tt\small \{mubniazi, kallej\}@kth.se}}%
}

\begin{document}

\maketitle
\thispagestyle{empty}
\pagestyle{empty}

\begin{abstract}
Kazantzis–Kravaris/Luenberger (KKL) observers are a class of state observers for nonlinear systems that rely on an injective map to transform the nonlinear dynamics into a stable quasi-linear latent space, from where the state estimate is obtained in the original coordinates via a left inverse of the transformation map. Current learning-based methods for these maps are designed exclusively for autonomous systems and do not generalize well to controlled or non-autonomous systems. 
In this paper, we propose two learning-based designs of neural KKL observers for non-autonomous systems whose dynamics are influenced by exogenous inputs. To this end, a hypernetwork-based framework (HyperKKL) is proposed with two input-conditioning strategies. First, an augmented observer approach (HyperKKL\textsubscript{obs}) adds input-dependent corrections to the latent observer dynamics while retaining static transformation maps. Second, a dynamic observer approach (HyperKKL\textsubscript{dyn}) employs a hypernetwork to generate encoder and decoder weights that are input-dependent, yielding time-varying transformation maps.
We derive a theoretical worst-case bound on the state estimation error. 
Numerical evaluations on four nonlinear benchmark systems show that input conditioning yields consistent improvements in estimation accuracy over static autonomous maps, with an average symmetric mean absolute percentage error (SMAPE) reduction of 29\% across all non-zero input regimes. 
\end{abstract}


\section{Introduction}

Reconstructing the full state of a dynamical system from partial measurements is a foundational problem in control theory. In most applications, only a subset of the state variables is directly measurable, and the remaining states must be estimated by an observer. However, designing observers for nonlinear systems is particularly challenging. 

Kazantzis--Kravaris/Luenberger (KKL) observers \cite{kazantzis1998, andrieu2006} offer a principled approach by constructing an injective transformation that lifts the nonlinear system into a higher-dimensional space where the observer dynamics are quasi-linear and globally stable. 
Under mild observability conditions, this transformation is theoretically guaranteed to exist, and the state estimate, obtained by inverting the transformation at the latent observer state, globally asymptotically converges to the true state \cite{andrieu2006, brivadis2023, bernard2022}. 
In practice, however, synthesizing a KKL observer requires finding a transformation that satisfies a certain PDE, which turns out to be analytically intractable for general nonlinear systems. Computing a left inverse of this transformation is strictly harder, as it entails inverting a nonlinear map for which no closed-form expression is typically available \cite{andrieu2021}.

Furthermore, real-world systems are typically controlled with time-varying signals. Robotic platforms receive motor commands, traffic networks experience a time-varying demand, and industrial processes are subject to time-varying parameters \cite{astrom2021}. 
These time-varying elements fundamentally alter the state evolution and render the system non-autonomous. 
The theoretical extension of KKL observers to non-autonomous settings is established in \cite{bernard2017, bernard2018, bernard2019}, where it is shown that the transformation map must be time-varying to accommodate system inputs. 
Yet, existing learning-based methods for KKL observer design remain limited to the autonomous case, and extending them to handle inputs remains an open problem \cite{buisson2023, niazi2025}. 

\subsection{Literature Gap}

The practical bottleneck of KKL observers is the numerical approximation of the transformation map and its left inverse. 
For autonomous systems, \cite{ramos2020} and \cite{janny2021} propose supervised neural network approaches using simulated trajectory data, while \cite{buisson2023} introduces an unsupervised autoencoder architecture with PDE regularization. 
Subsequently, \cite{niazi2023, niazi2025} develop a supervised physics-informed learning method that embeds the PDE constraint directly into the training, yielding improved generalization and robustness guarantees. 
Recently, \cite{miao2023} proposes a neural ODE method and \cite{marani2025} proposes an approach based on contraction analysis. 
The non-autonomous KKL problem was analyzed from a theoretical perspective in \cite{bernard2017, bernard2018, bernard2019}. However, existing numerical synthesis methods for KKL observers for non-autonomous systems rely on static transformation mappings \cite{peralez2021, peralez2022, peralez2024, marani2025cdc, woelk2026}. These approaches extend the autonomous framework but do not incorporate time-varying transformation maps, which limits their scope to systems that satisfy the restrictive instantaneous observability condition \cite{bernard2018}. To our knowledge, no systematic study exists that compares input conditioning strategies for learning-based KKL observers under mild observability conditions \cite{bernard2017, bernard2018, bernard2019}, nor has dynamic conditioning of the transformation maps been explored.

\subsection{Our Contributions}

To address the gap above, we propose HyperKKL, a hypernetwork-based framework \cite{ha2016, chauhan2024} for conditioning KKL observers on exogenous inputs. We investigate two complementary strategies: (i) HyperKKL\textsubscript{obs}: an augmented observer that uses a hypernetwork to inject input-dependent corrections into the latent observer dynamics while retaining static transformation maps, and (ii) HyperKKL\textsubscript{dyn}: a dynamic approach that uses a hypernetwork to generate input-dependent encoder and decoder weights, producing time-varying transformation and inverse maps. 

Through this investigation, we address two fundamental questions: Q1) Is input conditioning necessary for designing KKL observers for controlled systems, and if so, which conditioning strategy is more effective: augmenting the latent dynamics or dynamically adapting the transformation maps? Q2) Why are hypernetworks the appropriate architectural mechanism for implementing input conditioning in KKL observers?

Our contributions are threefold. First, we implement and systematically compare two complementary input-conditioning strategies for learning-based neural KKL observers for non-autonomous systems: latent-dynamics augmentation and dynamic transformation conditioning. 
Second, we introduce a hypernetwork-based realization of time-dependent transformation maps that enables KKL observer design under mild observability conditions, without relying on the restrictive instantaneous observability assumption. 
Third, we derive theoretical guarantees in the form of worst-case estimation error bounds that characterize how the network's approximation capacity, the residual violations of the governing PDE, and the sensitivity of the decoder jointly shape the resulting state estimation accuracy.






\section{Problem Formulation}
\label{sec:problem}

Consider a controlled (non-autonomous) nonlinear system in continuous time $t\in\bb R_+$
\begin{subequations}
    \label{eq:sys}
    \begin{align}
        \dot x(t) &= f(x(t), u(t)) 
        \label{eq:sys-state} \\ 
        y(t) &= h(x(t)) 
        \label{eq:sys-output}
    \end{align}
\end{subequations}
where $x(t) \in \ms X \subseteq \bb R^{n_x}$ is the state, $u(t) \in \ms U \subset \bb R^{n_u}$ is a known exogenous input, and $y(t) \in \bb R^{n_y}$ is the measured output.
The input trajectory $u:\bb R_+ \to \ms U$ belongs to a set of admissible inputs $\mf U \subset \mc L_\text{loc}^\infty(\bb R_+)$, where $\mc L_\text{loc}^\infty(\bb R_+)$ denotes the Lebesgue space of \textit{locally essentially bounded measurable functions} defined on the non-negative real line $\bb R_+$. That is, the mapping $t\mapsto u(t)$ is Lebesgue measurable and, for every compact interval $\ms T \subset \bb R_+$, the function is bounded almost everywhere, i.e., $\mathrm{ess} \sup_{t \in \ms T} \|u(t)\|_\infty < \infty$. 

Let $X(t, u; \vartheta, t_0)$ denote the solution of \eqref{eq:sys-state} at time $t\in\bb R_+$, driven by the input trajectory $u\in\mf U$ and initialized from the state $x(t_0)=\vartheta\in\ms X$ at time $t_0\in\bb R_+$. Similarly, $Y(t, u; \vartheta, t_0)$ denote the output trajectory corresponding to the solution $X(t, u; \vartheta, t_0)$.
From \cite{bernard2018}, the following assumption ensures the theoretical existence of a KKL observer.

\begin{assume}
    \label{assume:system}
    System \eqref{eq:sys} satisfies the following:
    \begin{enumerate}
        \item $f$ is continuously differentiable and $h$ is continuous. 
        \item System \eqref{eq:sys-state} is \textit{forward complete} on $\ms X$ from $\ms X_0\subseteq \ms X$. That is, $\forall u\in \mf U$ and $\forall \vartheta \in\ms X_0$, the maximal \textit{interval of existence} includes $[0, +\infty)$, and the unique, \textit{absolutely continuous} solution $X(t, u; \vartheta, 0) \in \ms X$, $\forall t \in \bb R_+$.
        \item Solutions to \eqref{eq:sys-state} initiated from $\ms X$ at any time $t \in \bb R_+$ do not exhibit finite-time escape in backward time over the interval $[0, t]$. That is, for all $u\in \mf U$, for all $t \in \bb R_+$, and for all $\vartheta \in \ms X$, the unique, absolutely continuous solution $X(s, u; \vartheta, t)\not\to\infty$, $\forall s\in [0, t]$.
        \item The system is \textit{backward-distinguishable}. That is, for any given input $u\in\mf U$, there exists a finite time $t_u \ge 0$ such that $\forall t\ge t_u$ and $\forall (\vartheta_1, \vartheta_2) \in \ms X^2$, the following implication holds: if $Y(s,u; \vartheta_1,t) = Y(s,u; \vartheta_2,t)$, $\forall s\in[t-t_u,t]$, then it must be that $\vartheta_1=\vartheta_2$.
        \hfill $\diamond$
    \end{enumerate}
\end{assume}

From \cite{bernard2018}, designing a KKL observer requires a state transformation map $\mc T: \bb R^{n_x} \times \bb R_+ \to \bb R^{n_z}$ that transforms the nonlinear dynamics \eqref{eq:sys-state} into a globally stable, quasi-linear latent system
\begin{equation}
    \label{eq:latent-z}
    \dot z(t) = Az(t) + Bh(x(t))
\end{equation}
where $z(t) = \mc T(x(t), t) \in \bb R^{n_z}$ with $n_z = n_y(2n_x + 1)$, $A\in \bb R^{n_z\times n_z}$ is a Hurwitz matrix, and $B\in \bb R^{n_z\times n_y}$ is such that $(A, B)$ is a controllable pair.
Because of \eqref{eq:latent-z}, the transformation map $\mc T$ must be continuously differentiable and satisfy the following PDE
\begin{equation}
    \label{eq:pde}
    \frac{\partial \mc T}{\partial x}(x, t) f(x, u(t)) + \frac{\partial \mc T}{\partial t}(x, t) = A\mc T(x, t) + Bh(x)
\end{equation}
for all $x\in \ms X$, $t\in \bb R_+$, and $u\in \mf U$.
In addition, $\mc T$ must be injective, i.e., there exists a left inverse $\mc T^*$ such that $\mc T^*(\mc T(x,t), t)=x$ for all $x\in\ms X$ and $t\in \bb R_+$.

Unlike the autonomous case \cite{andrieu2006, brivadis2023}, where the transformation map is time-invariant, the presence of input $u$ necessitates that $\mc T(x, t)$ is time-varying and satisfies PDE \eqref{eq:pde} with temporal drift $\partial \mc T/\partial t$, which captures the non-stationary geometry of the system's attractor induced by the input $u(t)$.

Note that solving \eqref{eq:pde} analytically for every input trajectory $u\in \mf U$ and obtaining the inverse of its solution at every time $t\in \bb R_+$ are intractable tasks. Therefore, we aim to learn the time-varying maps $\mc T(x,t), \mc T^*(z,t)$ using neural networks and synthetic data generated by simulating \eqref{eq:sys}.

Let $\hat{\mc T}(x,t)$ and $\hat{\mc T}^*(z,t)$ denote the learned transformation and left inverse maps. A general form of the neural KKL observer is
\begin{subequations}
    \label{eq:kkl}
    \begin{align}
        \dot{\hat z}(t) &= A\hat z(t) + B y(t) + \Phi(\hat z(t), u(t)) \label{eq:kkl-z} \\
        \hat x(t) &= \hat{\mc T}^*(\hat z(t),t)
        \label{eq:kkl-estimate}
    \end{align}
\end{subequations}
where $\hat x(t)$ is the state estimate and $\Phi: \bb R^{n_z}\times \bb R^{n_u}\to\bb R^{n_z}$ is an optional input injection term. When the transformation maps are time-varying, the injection term may be set to $\Phi \equiv 0$, since the input dependence is absorbed into $\hat{\mc T}$ and $\hat{\mc T}^*$. Conversely, when $\hat{\mc T}$ and $\hat{\mc T}^*$ are static (time-invariant), the term $\Phi$ must compensate for the model mismatch induced by the input.
This formulation naturally leads to two complementary conditioning strategies, which we develop in~\cref{sec:framework}: $\text{HyperKKL}_\text{obs}$, which retains static maps and learns a nonzero $\Phi$, and $\text{HyperKKL}_\text{dyn}$, which learns time-varying maps and sets $\Phi\equiv 0$.

Our objective is to develop a learning-based method for obtaining the approximations $\hat{\mc T}$ and $\hat{\mc T}^*$ for the transformation $\mc T$ and its left inverse $\mc T^*$, respectively, alongside the injection term $\Phi$ when applicable, and provide a bound on the estimation error $\xi(t) = x(t) - \hat x(t)$ of the learned neural KKL observer.

\section{HyperKKL Framework}
\label{sec:framework}

Let the transformation map $\mc T$ and its left inverse $\mc T^*$ be approximated using neural networks with weights $\theta(t)$ and $\eta(t)$, i.e.,
\begin{equation}
    \label{eq:approximations}
    \hat{\mc T}(x,t) \triangleq \mc N_\text{enc}(x; \theta(t)), \quad \hat{\mc T}^*(z,t) \triangleq \mc N_\text{dec}(z; \eta(t)).
\end{equation}
We investigate two strategies for conditioning KKL observers on exogenous inputs $u(t)$: \textit{(i)}~$\text{HyperKKL}_\text{obs}$, an augmented observer that employs input injection into the latent dynamics \eqref{eq:kkl-z} while keeping the learned transformation maps and its inverse static ($\theta(t)=\ol\theta$, $\eta(t)=\ol\eta$), and \textit{(ii)}~$\text{HyperKKL}_\text{dyn}$, a dynamic transformation approach that uses a hypernetwork to generate input-dependent encoder and decoder weights $\wt\theta(t)$ and $\wt\eta(t)$, producing time-varying encoder and decoder networks with weights $\theta(t)=\ol\theta+\wt\theta(t)$ and $\eta(t)=\ol\eta+\wt\eta(t)$. Both strategies share a common first phase in which base encoder and decoder networks are pretrained on the autonomous dynamics ($u\equiv 0$) to learn the base weights $\ol\theta$ and $\ol\eta$, using the method proposed by \cite{niazi2023}. 
The Phase~1 pretraining procedure is provided in \cref{alg:phase1}.

\begin{algorithm}[!t]
\caption{Autonomous KKL training (Phase~1)}
\label{alg:phase1}
\begin{algorithmic}[1]
 
\STATE \textbf{Stage A}: Train base encoder $\mathcal{T}_{\ol \theta}\colon x \mapsto z$
\STATE Collect dataset $\mathcal{D}_\text{train-aut}=\{(x_k^i,\,z_k^i,\,y_k^i)\}$
\FOR{$\text{epoch} = 1, \dots, E_1$}
  \FOR{each mini-batch $(\bm{x}, \bm{z}, y)$ from $\mathcal{D}_\text{train-aut}$}
    \STATE $\hat{\bm{z}} \leftarrow \mathcal{T}_{\ol\theta}(\bm{x})$
    \STATE $\mathcal{L}_{\text{mse}} \leftarrow \|\hat{\bm{z}} - \bm{z}\|^2$
    \STATE $\mathcal{L}_{\text{pde}} \leftarrow \bigl\|\frac{\partial \mathcal{T}_{\ol\theta}}{\partial x}\,f(\bm{x}) - A\,\hat{\bm{z}} - B\,y\bigr\|^2$ \hfill $\triangleright$ \textit{auto-diff}
    \STATE $\nu \leftarrow \nu_{\max}\cdot\min(1,\;\text{epoch}/5)$ \hfill $\triangleright$ \textit{linear warmup}
    \STATE $\mathcal{L} \leftarrow \mathcal{L}_{\text{mse}} + \nu\mathcal{L}_{\text{pde}}$
    \STATE Update $\ol\theta$ on $\nabla_{\ol\theta} \mathcal{L}$ \hfill $\triangleright$ grad clipped to $\|\nabla\| \le 1$
  \ENDFOR
  \STATE Scheduler step (ReduceLROnPlateau)
\ENDFOR
 
\medskip
\STATE \textbf{Stage B}: Train base decoder $\mathcal{T}^*_{\ol\eta}\colon z \mapsto x$
\STATE Generate dataset $\;\mathcal{D}^*_\text{train-aut} = \{(x_k^i,\;\mathcal{T}_{\ol\theta}(x_k^i))\}$
\FOR{$\text{epoch} = 1, \dots, E_2$}
  \FOR{each mini-batch $(\bm{x}, \bm{z})$ from $\mathcal{D}^*_\text{train-aut}$}
    \STATE $\hat{\bm{x}} \leftarrow \mathcal{T}^*_{\ol\eta}(\bm{z})$
    \STATE $\mathcal{L} \leftarrow \|\hat{\bm{x}} - \bm{x}\|^2$
    \STATE Update $\ol\eta$ on $\nabla_{\ol\eta} \mathcal{L}$ \hfill $\triangleright$ grad clipped to $\|\nabla\| \le 1$
  \ENDFOR
  \STATE Scheduler step (ReduceLROnPlateau)
\ENDFOR
\end{algorithmic}
\end{algorithm}

\subsection{Data Generation and the Operating Domain}
\label{sec:data}

Both $\text{HyperKKL}_\text{obs}$ and $\text{HyperKKL}_\text{dyn}$ share the same offline dataset. Let $\ms X_0 \subset \ms X$ be a set of initial conditions and $\mf U_\text{train} \subset \mf U$ be a representative set of input signals. We sample $N_\text{traj}$ initial conditions $\{x_0^{(i)}\}_{i=1}^{N_\text{traj}} \subset \ms X_0$ and a distinct set of $N_\text{inp}$ input signals $\{u^{(j)}(\cdot)\}_{j=1}^{N_\text{inp}} \subset \mf U_\text{train}$. We simulate the system dynamics in \eqref{eq:sys-state} for every pair of initial condition and input signal over a horizon $T>0$ with a constant sampling period $\Delta t$. Let the discrete time instances be defined as $t_k = k \Delta t$, yielding a total of $N_\text{step} = T / \Delta t$ steps per trajectory. 

To accommodate the historical input window of length $\omega$ and the forward finite difference step at $t_{k+1}$, we define the valid starting index $N_\text{wind} = \lceil \omega / \Delta t \rceil$. This yields the following empirical dataset
\begin{equation}
    \label{eq:dataset}
    \mathcal{D}_\text{train} = \bigcup_{i=1}^{N_\text{traj}} \bigcup_{j=1}^{N_\text{inp}} \bigcup_{k=N_\text{wind}}^{N_\text{step}-1} \!\!\!\left\{ \!\!\left(x^{(i,j)}_k, u^{(j)}_{[t_k-\omega, t_k]}, u^{(j)}_{[t_{k+1}-\omega, t_{k+1}]} \right) \!\!\right\}
\end{equation}
where $x^{(i,j)}_k \in \mathcal{X}$ is the sampled state at time $t_k$ originating from the $i$-th initial condition and driven by the $j$-th input signal, and the adjacent input windows are stored to approximate the temporal derivative.

\subsection{$\text{HyperKKL}_\text{obs}$: Augmented Observer}
\label{sec:augmented}

The first strategy, $\text{HyperKKL}_\text{obs}$, retains the static base encoder $\hat{\mc T}(x) = \mc N_\text{enc}(x; \ol\theta)$ and decoder $\hat{\mc T}^*(z) = \mc N_\text{dec}(z; \ol\eta)$ pretrained in Phase~1 and compensates for the input dependence by augmenting the latent observer dynamics with an input injection term. Specifically, the observer takes the form
\begin{subequations}
    \label{eq:augmented-obs}
    \begin{align}
        \dot{\hat z}(t) &= A\hat z(t) + By(t) + \Phi_\varpi(\hat z(t), u_{[t-\omega, t]}) \label{eq:augmented-obs-z} \\
        \hat x(t) &= \mc N_\text{dec}(\hat z(t);\, \ol\eta)
        \label{eq:augmented-obs-x}
    \end{align}
\end{subequations}
where $\Phi_\varpi: \bb R^{n_z}\times \bb R^{\omega \times n_u}\to \bb R^{n_z}$ is the input injection map, and $\omega\in\bb N$ is the length of the input history window.

The injection map $\Phi_\varpi$ is constructed in two stages.  First, a shared gated recurrent unit (GRU)~\cite{cho2014} processes the input window $u_{[t-\omega, t]}$ and produces a temporal context embedding $h_t$, which is projected to a latent vector $\ell = W_\text{proj}h_t$ via a linear layer without bias. Second, a multi-layer perceptron (MLP) $\phi_\varpi$ takes the concatenation $[\hat z;\, \ell]$ as input and outputs the injection signal $\Phi_\varpi(\hat z, u_{[t-\omega,t]}) \in \bb R^{n_z}$.
The absence of bias in the projection layer guarantees that when $u\equiv 0$, the context embedding vanishes and $\Phi_\varpi \equiv 0$, thereby exactly recovering the autonomous KKL observer.

The learnable parameters in $\text{HyperKKL}_\text{obs}$ are the GRU weights $\gamma$ and the injection network weights $\varpi$. The base encoder and decoder remain frozen. Training minimizes the dynamics matching loss
\begin{equation}
    \label{eq:aug-loss}
    \mc L_\text{aug}(\gamma, \varpi) = \left\| \dot z_\text{obs} - \dot z_\text{true} \right\|^2
\end{equation}
over the dataset $\mathcal D_{\text{train}}$,
where $\dot z_\text{true} = \frac{\partial \mc N_\text{enc}}{\partial x}(\ol\theta)\, f(x_k, u(t_k))$ is obtained via the Jacobian-vector product through the frozen encoder, and $\dot z_\text{obs} = Az_k + By_k + \Phi_\varpi(z_k, u_{[t_k-\omega, t_k]})$ with $z_k = \mc N_\text{enc}(x_k; \ol\theta)$. The procedure is provided in \cref{alg:static_lstm}.

\begin{algorithm}[!t]
\caption{$\text{HyperKKL}_\text{obs}$ training (Phase 2)}
\label{alg:static_lstm}
\begin{algorithmic}[1]
\STATE Freeze $\mathcal{T}_{\bar\theta}$, $\mathcal{T}^*_{\bar\eta}$ \hfill $\triangleright$ \textit{no grad}
\STATE $\phi_\varpi, \mathcal{E}_\gamma \leftarrow$ Init injection network and GRU encoder
\medskip
\FOR{$\text{epoch} = 1, \dots, E$}
  \FOR{each mini-batch $(\bm{x}, y, \bm{u}_w, u, \dot{\bm{x}})$}
    \STATE $\bm{z} \leftarrow \mathcal{T}_{\bar\theta}(\bm{x})$ \hfill $\triangleright$ \textit{frozen encoder}
    \STATE $\bm{h}_N \leftarrow \text{GRU}(\bm{u}_w)$
    \STATE $\bm{\ell} \leftarrow W_{\text{proj}}\,\bm{h}_N$ \hfill $\triangleright$ \textit{latent embedding (no bias)}
    \STATE $\bm{\Phi} \leftarrow \phi_\varpi([\bm{z};\, \bm{\ell}])$ \hfill $\triangleright$ $\bm{\Phi} \in \mathbb{R}^{n_z}$
    \STATE $\dot{\bm{z}}_{\text{true}} \leftarrow \text{JVP}(\mathcal{T}_{\bar\theta},\,\bm{x},\,\dot{\bm{x}})$ \hfill $\triangleright$ $\frac{\partial \mathcal{T}}{\partial x}\,f(x,u)$
    \STATE $\dot{\bm{z}}_{\text{obs}} \leftarrow A\bm{z} + By + \bm{\Phi}$
    \STATE $\mathcal{L} \leftarrow \|\dot{\bm{z}}_{\text{obs}} - \dot{\bm{z}}_{\text{true}}\|^2$
    \STATE Update $(\gamma, \varpi)$ on $\nabla_{(\gamma,\varpi)} \mathcal{L}$ \hfill $\triangleright$ {grad clip $\|\nabla\| \le 1$}
  \ENDFOR
  \STATE Scheduler step (cosine annealing)
\ENDFOR
\end{algorithmic}
\end{algorithm}

\begin{algorithm}[!t]
\caption{$\text{HyperKKL}_\text{dyn}$ training (Phase 2)}
\label{alg:dynamic_lora}
\begin{algorithmic}[1]
\STATE Freeze $\mathcal{T}_{\ol\theta}$, $\mathcal{T}^*_{\ol\eta}$ \hfill $\triangleright$ \textit{no grad}
\STATE $\psi \leftarrow$ Init all hyper-network parameters
\medskip
\FOR{$\text{epoch} = 1, \dots, E$}
  \FOR{each mini-batch $(\bm{x}, y, \bm{u}_w, \bm{u}_w^{-}, \dot{\bm{x}})$}
    \STATE $\bm{h} \leftarrow \text{GRU}(\bm{u}_w)$ \hfill $\triangleright$ \textit{encode input history}
    \FOR{$l = 1, \dots, L$}
      \STATE $\bm{f}_l \leftarrow \text{MLP}_{\text{bb}}([\bm{h}\,;\,\bm{e}_l])$
      \STATE $[\bm{a}_l\,\|\,\bm{b}_l] \leftarrow \text{Head}_l(\bm{f}_l)$
      \STATE $\Delta W_l \leftarrow s\cdot\underbrace{\bm{a}_l}_{d_{\text{in}}\times r}\,\underbrace{\bm{b}_l}_{r\times d_{\text{out}}}$ \hfill $\triangleright$ \textit{low-rank delta}
    \ENDFOR
    \STATE $W_l' \leftarrow W_l + \Delta W_l$ for weights; keep biases fixed
    \medskip
    \STATE $\bm{z} \leftarrow \widetilde{\mathcal{T}}(\bm{x})$, \quad $\hat{\bm{x}} \leftarrow \widetilde{\mathcal{T}}^*(\bm{z})$
    \hfill $\triangleright$ \textit{modulated maps}
    \medskip
    \STATE \textit{// PDE loss}
    \STATE $\partial_t \widetilde{\mathcal{T}} \leftarrow \text{JVP}(\widetilde{\mathcal{T}}(\bm{x};\,\bm{u}_w),\;\bm{u}_w,\;\dot{\bm{u}}_w)$ 
    \STATE $\partial_x \mathcal{T}\cdot f \leftarrow \text{JVP}(\widetilde{\mathcal{T}},\,\bm{x},\,\dot{\bm{x}})$ \hfill $\triangleright$ \textit{notation: $\partial_x \triangleq \frac{\partial}{\partial x}$}
    \STATE $\mathcal{L}_{\text{pde}} \leftarrow \|\partial_t \mathcal{T} + \partial_x \mathcal{T}\cdot f - A\bm{z} - By\|^2$
    \medskip
    \STATE $\mathcal{L}_{\text{recon}} \leftarrow \|\hat{\bm{x}} - \bm{x}\|^2$
    \STATE $\mathcal{L} \leftarrow \mathcal{L}_{\text{recon}} + \lambda \cdot \mathcal{L}_{\text{pde}}$
    \STATE Update $\psi$ on $\nabla_\psi \mathcal{L}$ \hfill $\triangleright$ grad clipped to $\|\nabla\| \le 1$
  \ENDFOR
  \STATE Scheduler step (cosine annealing)
\ENDFOR
\end{algorithmic}
\end{algorithm}

\subsection{$\text{HyperKKL}_\text{dyn}$: Dynamic Transformation Maps}
\label{sec:dynamic}

\begin{figure}[!t]
    \centering
    \includegraphics[width=\columnwidth]{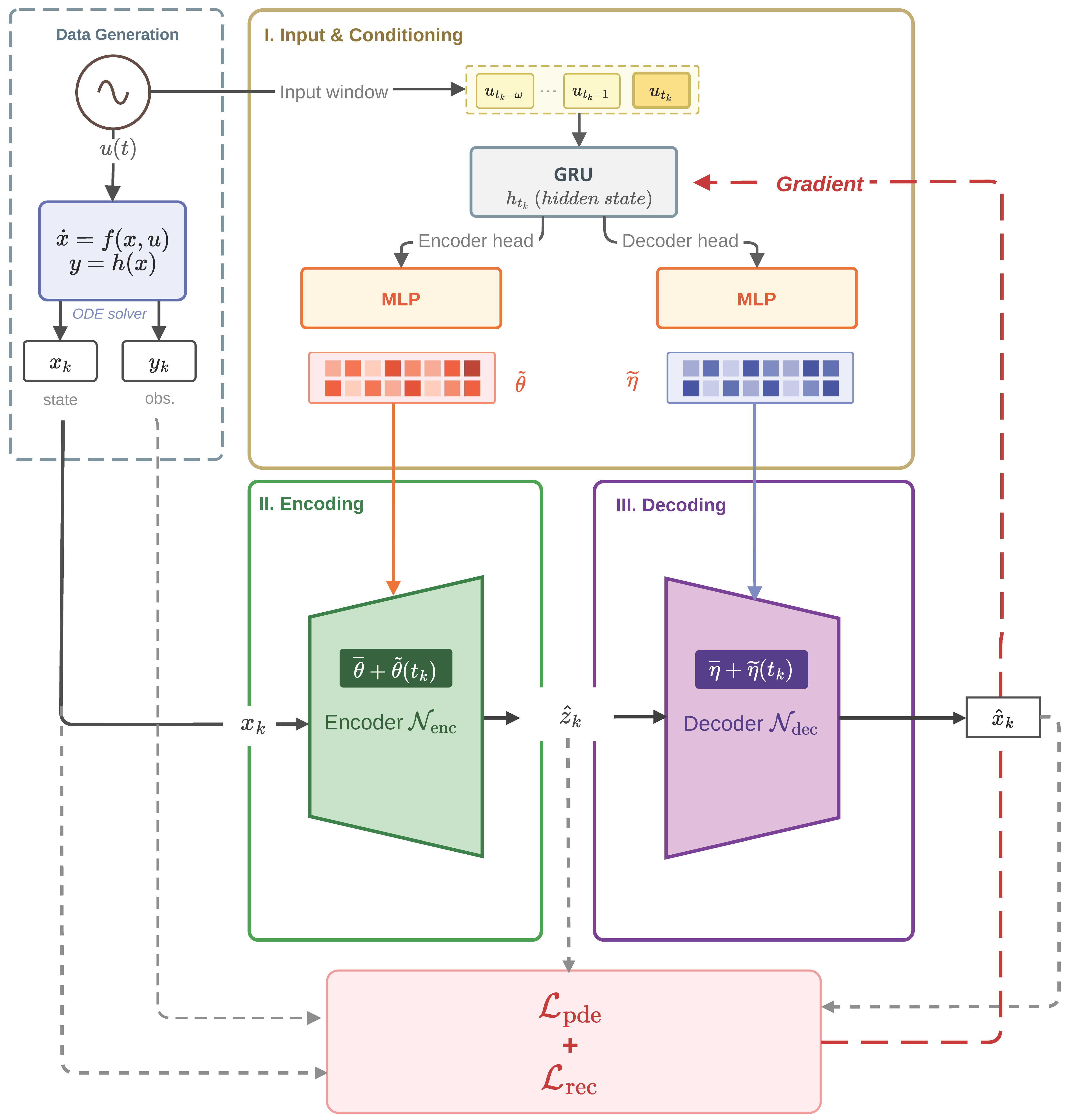}
    \caption{$\text{HyperKKL}_\text{dyn}$ training architecture. The shared GRU processes the input window $u_{[t_k-\omega, t_k]}$ to produce a context embedding $h_t$, from which separate MLP heads predict weight perturbations $\wt\theta(t_k)$ and $\wt\eta(t_k)$. These are added to the frozen base weights $\ol\theta$ and $\ol\eta$, yielding an input-conditioned encoder and decoder. The losses $\mc L_\text{pde}$ and $\mc L_\text{rec}$ are backpropagated only through the hypernetwork parameters $\psi$. }
    \label{fig:architecture}
\end{figure}

The second strategy, $\text{HyperKKL}_\text{dyn}$, directly addresses the time-varying nature of the transformation map by conditioning the encoder and decoder weights on the input. 
The parameters of the networks in \eqref{eq:approximations} are decomposed into the static base weights $\ol\theta,\ol\eta$ (pretrained in Phase~1) and dynamic perturbations $\wt\theta(t),\wt\eta(t)$ as
\begin{equation}
    \label{eq:parameter-decomposition}
    \theta(t) = \ol\theta + \wt\theta(t), \quad \eta(t) = \ol\eta + \wt\eta(t).
\end{equation}
We impose a hypernetwork ansatz where the perturbations are functions of the input history over a window $\omega\in\mathbb N$, i.e.,
\begin{equation}
    \label{eq:hyper-net}
    [\wt\theta(t),\wt\eta(t)] = \mc H_\psi(u_{[t-\omega, t]})
\end{equation}
where the hypernetwork $\mc H_\psi$ is parameterized by $\psi$, which is learnable. Here, the hypernetwork $\mc H_\psi$ consists of a shared GRU to extract a temporal context embedding, followed by separate MLPs to predict the parameter shifts for the encoder $\mc N_\text{enc}(x; \theta(t))$ and decoder $\mc N_\text{dec}(z; \eta(t))$. 

Since the transformation maps are now time-varying by construction, the latent observer dynamics do not require an explicit injection term. The observer takes the form
\begin{subequations}
    \label{eq:dynamic-obs}
    \begin{align}
        \dot{\hat z}(t) &= A\hat z(t) + By(t) \label{eq:dynamic-obs-z} \\
        \hat x(t) &= \mc N_\text{dec}(\hat z(t);\, \eta(t))
        \label{eq:dynamic-obs-x}
    \end{align}
\end{subequations}
where $\eta(t) = \ol\eta + \wt\eta(t)$ is generated by the hypernetwork at each time step.
The architecture is illustrated in \cref{fig:architecture}.


We define two loss components evaluated over $\ms D_\text{train}$ for $\text{HyperKKL}_\text{dyn}$.
First, the reconstruction loss is given by 
\begin{equation}
    \label{eq:recon-loss}
    \mc L_\text{rec}(\psi) \!=\! \left\| x_k - \mc N_\text{dec} \Big( \mc N_\text{enc}(x_k; \theta(t_k; \psi)); \eta(t_k; \psi) \Big) \right\|^2\!.
\end{equation}
Second, the dynamic PDE residual is given by
\begin{multline}
    \label{eq:pde-loss}
    \mc L_\text{pde}(\psi) = \bigg\| \frac{\partial \mc N_\text{enc}}{\partial x}(x_k; \theta(t_k; \psi)) f(x_k, u(t_k)) \\ 
    + \frac{\partial \hat{\mc T}}{\partial t}(x_k,t_k; \psi) \!-\! \Big(\! A \mc N_\text{enc}(x_k; \theta(t_k; \psi)) \!+\! B h(x_k) \!\Big) \bigg\|^2\!.
\end{multline}
The learning problem is then formulated as empirical risk minimization over the dataset $\ms D_\text{train}$ as
\begin{equation}
    \label{eq:learning-prob}
    \min_{\psi} \frac{1}{|\ms D_\text{train}|} \sum_{\ms D_\text{train}} \Big( \mc L_\text{rec}(\psi) + \nu \mc L_\text{pde}(\psi) \Big)
\end{equation}
where $\nu > 0$ is a hyperparameter that balances the topological consistency of the latent space against the algebraic accuracy of the state reconstruction. 
The procedure is provided in \cref{alg:dynamic_lora}.

\subsection{Comparison of $\text{HyperKKL}_\text{obs}$ and $\text{HyperKKL}_\text{dyn}$}

$\text{HyperKKL}_\text{obs}$ and $\text{HyperKKL}_\text{dyn}$ represent fundamentally different approaches to input conditioning. $\text{HyperKKL}_\text{obs}$ keeps the static transformation maps fixed after learning and compensates for the input in the latent dynamics via the injection term $\Phi_\varpi$. This is computationally lighter, as only the injection network and GRU encoder are trained, and the observer retains the standard quasi-linear structure in the latent space. However, because the encoder and decoder are static, this strategy implicitly assumes that the autonomous transformation provides a sufficiently accurate embedding for the input-driven regime, with the injection term correcting only the residual dynamics mismatch.

$\text{HyperKKL}_\text{dyn}$, by contrast, directly modulates the parameters of the learned encoder and decoder to track the time-varying geometry of the system's attractor, as prescribed by the theory in \cite{bernard2018}. 
The hypernetwork generates input-dependent perturbations to both the encoder and decoder weights, which enables the observer to adapt the transformation and its inverse to each input trajectory realization. 
This is more expressive but also more computationally demanding, as it requires backpropagation through the generated weights.

\section{Theoretical Analysis and Design Trade-offs}
\label{sec:theory}

In this section, we provide a theoretical bound on the estimation error of a HyperKKL observer. For $x\in \ms X$ and $u\in \ms U$, define the PDE residual as
\begin{equation}
    \label{eq:pde-residual}
    \mc R_\text{pde}(x,u,t) \!=\! \frac{\partial\hat{\mc T}}{\partial x}f(x, u) + \frac{\partial\hat{\mc T}}{\partial t} - (A\hat{\mc T}(x, t) + Bh(x)).
\end{equation}
Define $\zeta(t) \triangleq \hat{\mc T}(x(t),t)$, where $x(t)$ is the true state and $\hat{\mc T}$ is the learned transformation. Then, it holds that
\begin{equation}
    \label{eq:zeta-dot}
    \dot\zeta(t) = A\zeta(t) + Bh(x(t)) + \mc R_\text{pde}(x(t),u(t),t)
\end{equation}
where $\mc R_\text{pde}(x(t),u(t),t)$ is the PDE residual in \eqref{eq:pde-residual}.
Let $\xi_z(t) = \zeta(t) - \hat z(t)$ be the latent space estimation error. From \eqref{eq:kkl-z} and \eqref{eq:zeta-dot}, it holds that
\[
\dot \xi_z(t) = A\xi_z(t) + \mc R_\text{pde}(x(t),u(t),t).
\]
Since $A$ is a Hurwitz matrix, there exist positive constants $\kappa, \lambda$ such that (see \cite[Appendix C.5]{sontag2013})
\begin{multline}
    \label{eq:xi-z-bound}
    \|\xi_z(t)\| \le \kappa e^{-\lambda t} \|\xi_z(0)\| \\
    + \int_0^t \!\kappa e^{-\lambda(t-\tau)} \|\mc R_\text{pde}(x(\tau),u(\tau),\tau)\| d\tau.
\end{multline}
Here, $\lambda>0$ is proportional to the eigenvalue of $A$ that is closest to the imaginary axis in the complex plane, and $\kappa\ge 1$ is the condition number of the eigenvector matrix of $A$.

\begin{lem}
    \label{lem:NN-bounds}
    Let the state space $\ms X$ and the set of admissible inputs $\mf U$ be compact. Assume the neural network architectures $\mc N_\text{enc}, \mc N_\text{dec}$, and $\mc H_\psi$ utilize continuously differentiable activation functions. Then, the following statements hold:
    \begin{enumerate}
        \item There exists $\epsilon_\text{pde} \ge 0$ such that the PDE residual $\mc R_\text{pde}(x, u, t)$ in \eqref{eq:pde-residual} is uniformly bounded across all states $x\in \ms X$, inputs $u\in \mf U$, and times $t\in\bb R_+$, i.e.,
        \begin{equation}
            \label{eq:Delta-pde-bound}
            \sup_{x \in \ms X, u\in \mf U, t \ge 0} \|\mc R_\text{pde}(x, u, t)\| \le \epsilon_\text{pde}.
        \end{equation}
        \item There exist $\epsilon_\text{enc} \ge 0$ and $\epsilon_\text{dec} \ge 0$ such that 
        \begin{align}
            \sup_{x \in \ms X, t \ge 0} \|\mc T(x, t) - \hat{\mc T}(x, t)\| &\le \epsilon_\text{enc} 
            \label{eq:enc-bound}
            \\
            \sup_{z \in \ms Z, t \ge 0} \|\mc T^*(z, t) - \hat{\mc T}^*(z, t)\| &\le \epsilon_\text{dec}
            \label{eq:dec-bound}
        \end{align}
        where $\ms Z \supseteq \mc T(\ms X)$ is a compact subset of $\bb R^{n_z}$ and $\hat{\mc T},\hat{\mc T}^*$ are defined in \eqref{eq:approximations}.
    \end{enumerate}
\end{lem}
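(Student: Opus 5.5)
The plan is to reduce every supremum in the statement to the maximum of a continuous function over a compact set, and then apply the extreme value theorem. The key observation is that the learned maps depend on $t$ only through the input window. By \eqref{eq:approximations}, \eqref{eq:parameter-decomposition} and \eqref{eq:hyper-net}, $\hat{\mc T}(x,t)=\mc N_\text{enc}(x;\ol\theta+\wt\theta)$ and $\hat{\mc T}^*(z,t)=\mc N_\text{dec}(z;\ol\eta+\wt\eta)$, where $[\wt\theta,\wt\eta]=\mc H_\psi(u_{[t-\omega,t]})$. For $\text{HyperKKL}_\text{obs}$ the maps are static, and this case is easier.

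So I would first introduce the set $\mc W=\{u_{[t-\omega,t]} : u\in\mf U,\ t\ge 0\}$ of admissible windows, with the sampled or GRU-input topology. I would argue that $\mc W$ is compact, using compactness of $\mf U$ and $\ms U$ (for instance, windows of a compact family of functions form a compact set in the sup-norm). Since $\mc N_\text{enc}$, $\mc N_\text{dec}$ and $\mc H_\psi$ are compositions of affine maps with $C^1$ activations, the maps $(x,w)\mapsto \mc N_\text{enc}(x;\ol\theta+\mc H_\psi(w)_\theta)$ and $(z,w)\mapsto\mc N_\text{dec}(z;\ol\eta+\mc H_\psi(w)_\eta)$ are $C^1$. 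Hence they are bounded on $\ms X\times\mc W$ and $\ms Z\times\mc W$, and so are their $x$-Jacobians.

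\textbf{Part 1.} I write
\[
\frac{\partial\hat{\mc T}}{\partial t}=\frac{\partial \mc N_\text{enc}}{\partial \theta}\,\frac{\partial \mc H_\psi}{\partial w}\,\dot w ,
\]
which is exactly the JVP used in \cref{alg:dynamic_lora}. The term $\frac{\partial\hat{\mc T}}{\partial x}f(x,u)$ is bounded because $f$ is continuous (Assumption~\ref{assume:system}) on the compact set $\ms X\times\ms U$. The term $A\hat{\mc T}+Bh(x)$ is bounded because $h$ is continuous. Summing these bounds with the triangle inequality gives $\epsilon_\text{pde}$.

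The main obstacle is $\dot w$, the time derivative of the window: for merely $\mc L^\infty_\text{loc}$ inputs it need not exist or be bounded. I would handle this by reading compactness of $\mf U$ as implying a uniform bound on the window's rate of change. One option is to take $\mf U$ compact in $W^{1,\infty}$ (uniformly Lipschitz inputs). Another is to take the GRU to act on sampled windows, so that $\partial_t$ is the finite difference used in training and is bounded by $2\sup_{\ms U}\|u\|/\Delta t$. I would state this interpretation explicitly as part of the hypotheses.

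\textbf{Part 2.} The bounds on the learned maps $\hat{\mc T}$ and $\hat{\mc T}^*$ come from the compactness argument above. For the true maps, I use the representation from \cite{bernard2018}:
\[
\mc T(x,t)=\int_0^t e^{A(t-s)}Bh\big(X(s,u;x,t)\big)\,ds .
\]
This is well defined by item 3 of Assumption~\ref{assume:system}. Assuming the backward trajectories remain in the compact set $\ms X$, so that $h$ is bounded along them, inequality \eqref{eq:xi-z-bound} gives
\[
\|\mc T(x,t)\|\le \frac{\kappa\|B\|}{\lambda}\sup_{\ms X}\|h\| ,
\]
uniformly in $t$. The left inverse $\mc T^*$ takes values in $\ms X$ on $\mc T(\ms X)$, and I would choose an extension to $\ms Z$ with values in a bounded set (for example, by composing with the projection onto the compact set $\ms X$). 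Then $\sup\|\mc T^*\|\le\sup_{\ms X}\|x\|$ plus at most a constant.

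Finally, the triangle inequality gives $\epsilon_\text{enc}\le\sup\|\mc T\|+\sup\|\hat{\mc T}\|$ and $\epsilon_\text{dec}\le\sup\|\mc T^*\|+\sup\|\hat{\mc T}^*\|$. These bounds are finite, though not necessarily small; the lemma asserts only existence.
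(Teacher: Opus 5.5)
Your argument is sound once the hypotheses you add are granted, and it takes a different route from the paper, whose entire proof is a one-line appeal to the extreme value theorem and the universal approximation theorem.

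For the learned maps you use the same compactness idea. You make it actually apply by observing that $t$ enters $\hat{\mc T},\hat{\mc T}^*$ only through the input window. The relevant domain is therefore $\ms X\times\mc W$ rather than the non-compact $\ms X\times\bb R_+$, on which neither cited theorem applies directly.

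For part~2 you drop universal approximation altogether. You bound $\mc T$, $\mc T^*$, $\hat{\mc T}$, $\hat{\mc T}^*$ separately and conclude by the triangle inequality. That proves exactly what the lemma literally asserts, namely finite constants for the given trained networks. The paper's citation of universal approximation instead gestures at the constants being \emph{small} for a suitable network. The design discussion after \eqref{eq:asymptotic-bound} relies on that smallness, and, as you note, your argument does not deliver it.

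In exchange, your route exposes hypotheses that the paper's proof uses silently, and you are right that they are needed:
\begin{itemize}
\item Without a uniform bound on $\dot u$, $\partial\hat{\mc T}/\partial t$ can be unbounded even for the compact singleton $\mf U=\{t\mapsto\sin(t^2)\}$. It also fails to exist at the switching instants of the square waves in \cref{sec:experiments}.
\item Without confining backward trajectories to a compact set, the integral representation of $\mc T$ need not be bounded in $t$. For $\dot x=-x$, $y=x$ and the paper's $A=-\mathrm{diag}(1,2,3)$, its first component is $xt$.
\item An arbitrary left inverse may be unbounded on $\ms Z\setminus\mc T(\ms X)$.
\end{itemize}

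Two of your justifications need tightening.
\begin{itemize}
\item Compactness of $\mc W$ does not follow from compactness of $\mf U$, since $t$ ranges over the unbounded $\bb R_+$. For the singleton above, the windows are not equicontinuous, so your parenthetical claim fails in the sup-norm. Get compactness instead from compactness of $\ms U$ with finitely many samples per window, or from a uniform $W^{1,\infty}(\bb R_+)$ bound via Arzel\`a--Ascoli.
\item Only your $W^{1,\infty}$ option bounds the genuine $\partial\hat{\mc T}/\partial t$ in \eqref{eq:pde-residual}. The finite-difference option bounds the training surrogate, which is not the derivative entering \eqref{eq:zeta-dot}, so it would not support \cref{prop:error-bound}.
\end{itemize}
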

\begin{proof}
    The proof follows directly from the extreme value theorem (e.g., \cite[ch.~9, p.~200]{royden2010}) and the universal approximation theorem for neural networks \cite{cybenko1989, hornik1989}.
\end{proof}

Assuming the activation functions are continuously differentiable, the learned maps are uniformly Lipschitz continuous on their domains for all $t\in \bb R_+$. The Lipschitz constant of the learned left inverse $\hat{\mc T}^*(z, t)$, denoted by $\ell_\text{dec} > 0$, is particularly relevant here. That is, for all $t\in\bb R_+$,
\begin{equation}
    \label{eq:dec-lip}
    \|\hat{\mc T}^*(z,t) - \hat{\mc T}^*(\hat z,t)\| \le \ell_\text{dec} \|z - \hat z\|, \quad \forall z,\hat z \in \ms Z.
\end{equation}

\begin{prop}
    \label{prop:error-bound}
    Let \cref{assume:system} and the conditions of \cref{lem:NN-bounds} hold. Then, given the system state $x(t)$ from \eqref{eq:sys-state} and its estimate $\hat x(t)$ from \eqref{eq:kkl-estimate}, the estimation error $\xi(t) = x(t)-\hat x(t)$ is bounded for all $t \in \bb R_+$ by
    \begin{equation}    
        \label{eq:error-bound}
        \|\xi(t)\| \!\le \epsilon_\text{dec} + \ell_\text{dec} \left( \kappa e^{-\lambda t} \|\xi_z(0)\| + \epsilon_\text{pde}\frac{ \kappa}{\lambda} + \epsilon_\text{enc}\right)
    \end{equation}
    where all the scalars above are from \eqref{eq:xi-z-bound}--\eqref{eq:dec-lip}.
\end{prop}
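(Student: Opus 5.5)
The approach is a triangle-inequality decomposition of $\xi(t)$ that passes through the true latent state $\mc T(x(t),t)$. Since $\mc T^*$ is a left inverse, $x(t) = \mc T^*(\mc T(x(t),t),t)$. We therefore write
\begin{align*}
\xi(t) &= \big[\mc T^*(\mc T(x(t),t),t) - \hat{\mc T}^*(\mc T(x(t),t),t)\big] \\
&\quad + \big[\hat{\mc T}^*(\mc T(x(t),t),t) - \hat{\mc T}^*(\hat z(t),t)\big].
\end{align*}
The first bracket is bounded by $\epsilon_\text{dec}$ using \eqref{eq:dec-bound}, since $\mc T(x(t),t) \in \mc T(\ms X) \subseteq \ms Z$. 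The second bracket is bounded by $\ell_\text{dec}\|\mc T(x(t),t) - \hat z(t)\|$ using the Lipschitz property \eqref{eq:dec-lip}.

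Next we split the latent error as
\[
\mc T(x(t),t) - \hat z(t) = \big[\mc T(x(t),t) - \hat{\mc T}(x(t),t)\big] + \xi_z(t),
\]
where $\xi_z(t) = \zeta(t) - \hat z(t)$ with $\zeta(t) = \hat{\mc T}(x(t),t)$. The first term is at most $\epsilon_\text{enc}$ by \eqref{eq:enc-bound}, using forward completeness in \cref{assume:system} to ensure $x(t) \in \ms X$ for all $t$. For the second term we invoke \eqref{eq:xi-z-bound} and replace $\|\mc R_\text{pde}\|$ by its uniform bound $\epsilon_\text{pde}$ from \cref{lem:NN-bounds}. This gives
\[
\int_0^t \kappa e^{-\lambda(t-\tau)}\epsilon_\text{pde}\, d\tau = \frac{\kappa\epsilon_\text{pde}}{\lambda}\big(1-e^{-\lambda t}\big) \le \frac{\kappa\epsilon_\text{pde}}{\lambda}.
\]
Collecting the pieces yields exactly \eqref{eq:error-bound}.

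Two points need care. The first is that the derivation of $\dot\xi_z = A\xi_z + \mc R_\text{pde}$ uses \eqref{eq:kkl-z} with $\Phi \equiv 0$. That is the HyperKKL$_\text{dyn}$ setting. For HyperKKL$_\text{obs}$, we interpret the injection $\Phi$ as absorbed into the residual: we define the residual relative to the observer's actual latent vector field, so that the same uniform bound $\epsilon_\text{pde}$ applies by the argument of \cref{lem:NN-bounds}.

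The main obstacle is the second point: the Lipschitz step requires both arguments to lie in $\ms Z$, but $\hat z(t)$ need not lie in $\mc T(\ms X)$. I would handle this by choosing $\ms Z$ to be a compact set containing $\mc T(\ms X)$ together with the observer trajectories. Such a set exists because $\hat z$ satisfies a stable linear ODE driven by the bounded output $y = h(x)$, with $h$ continuous on the compact set $\ms X$, so $\hat z(t)$ stays in a bounded set determined by $\|\hat z(0)\|$ and $\sup\|h\|$. Taking $\ms Z$ to be the closed hull of these sets keeps \eqref{eq:dec-bound} and \eqref{eq:dec-lip} valid by compactness and the $C^1$ activations. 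The constants $\epsilon_\text{dec}$ and $\ell_\text{dec}$ are then understood on this enlarged $\ms Z$.
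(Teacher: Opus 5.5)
Your proof is correct and is essentially the paper's argument: you use the same add-and-subtract through $\mc T(x(t),t)$ and $\hat{\mc T}(x(t),t)$, the same uniform bounds $\epsilon_{\text{dec}}$, $\epsilon_{\text{enc}}$, $\epsilon_{\text{pde}}$, and the same integral estimate for $\xi_z(t)$, which yields exactly \eqref{eq:error-bound}. The only difference is minor: you apply \eqref{eq:dec-lip} once, to the pair $\mc T(x(t),t)$, $\hat z(t)$, and then split in latent space, whereas the paper applies it twice and so implicitly also needs $\zeta(t)\in\ms Z$. Your two caveats are legitimate points the paper's proof passes over, namely that $\Phi$ is silently dropped when deriving $\dot\xi_z$ and that $\hat z(t)\in\ms Z$ is never verified.
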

\begin{proof}
    By using \eqref{eq:Delta-pde-bound} to bound the right hand side of \eqref{eq:xi-z-bound} and evaluating the integral, we obtain
    \begin{equation}
        \label{eq:xi-z-1}
        \|\xi_z(t)\| \le \kappa e^{-\lambda t}\|\xi_z(0)\| + \epsilon_\text{pde}\frac{\kappa}{\lambda}.
    \end{equation}
    Recall the variables $z(t) = \mc T(x(t),t)$ and $\zeta(t) = \hat{\mc T}(x(t),t)$. Then, $x(t) = \mc T^*(z(t),t)$. We can write the estimation error $\xi(t) = \mc T^*(z(t),t) - \hat{\mc T}^*(\hat z(t),t)$.
    Then, by adding and subtracting $\hat{\mc T}^*(z(t), t)$ and $\hat{\mc T}^*(\zeta(t), t)$, taking norm, and using the triangle inequality, we obtain
    \begin{multline*}
        \|\xi(t)\| \le \|\mc T^*(z(t), t) - \hat{\mc T}^*(z(t), t)\| \\ 
        + \|\hat{\mc T}^*(z(t), t) - \hat{\mc T}^*(\zeta(t), t)\| \\
        + \|\hat{\mc T}^*(\zeta(t), t) - \hat{\mc T}^*(\hat{z}(t), t)\|.
    \end{multline*}
    From \eqref{eq:dec-bound}, the first term is bounded by $\epsilon_\text{dec}$. From \eqref{eq:dec-lip} and \eqref{eq:enc-bound}, the second term is bounded as
    \[
    \|\hat{\mc T}^*(z(t), t) - \hat{\mc T}^*(\zeta(t), t)\| \le \ell_\text{dec} \epsilon_\text{enc}.
    \]
    Finally, the third term is bounded as
    \[
    \|\hat{\mc T}^*(\zeta(t), t) - \hat{\mc T}^*(\hat{z}(t), t)\| \le \ell_\text{dec} \|\xi_z(t)\|.
    \]
    By using \eqref{eq:xi-z-1} above, we arrive at \eqref{eq:error-bound}.
\end{proof}

The exponential term decays to zero, so the asymptotic bound is
\begin{equation}
    \label{eq:asymptotic-bound}
    \limsup_{t\to\infty} \|\xi(t)\| \!\le \epsilon_\text{dec} + \ell_\text{dec} \left(\epsilon_\text{pde}\frac{ \kappa}{\lambda} + \epsilon_\text{enc}\right).
\end{equation}
This bound applies to both variants but they affect its terms differently. $\text{HyperKKL}_\text{obs}$ freezes the decoder, preserving $\ell_\text{dec}$ at its pretrained value, but $\epsilon_\text{pde}$ may be larger since static maps cannot exactly satisfy the time-varying PDE~\eqref{eq:pde}. $\text{HyperKKL}_\text{dyn}$ can reduce $\epsilon_\text{pde}$ by adapting the encoder, but the weight perturbations $\wt\eta(t)$ may inflate $\ell_\text{dec}$. Since $\ell_\text{dec}$ multiplies both $\epsilon_\text{pde}$ and $\epsilon_\text{enc}$, even moderate increases can offset gains from a smaller PDE residual. This bias--variance trade-off explains why $\text{HyperKKL}_\text{obs}$ often outperforms $\text{HyperKKL}_\text{dyn}$ empirically (\cref{sec:experiments}).

The bound suggests several practical guidelines. The constants $\epsilon_\text{enc}$, $\epsilon_\text{dec}$ are reduced by increasing network expressivity and ensuring dense sampling of $\ms X_0$. The PDE residual $\epsilon_\text{pde}$ is controlled by the weight $\nu$ in \eqref{eq:learning-prob} and reduced by decreasing $\Delta t$. The Lipschitz constant $\ell_\text{dec}$ plays the most significant role as it amplifies both error sources; we apply spectral normalization \cite{miyato2018} to the decoder to bound it. Finally, $\kappa$ and $\lambda$ are set by $A$: we choose $A$ diagonal to ensure $\kappa = 1$ and place eigenvalues to balance convergence rate against numerical stiffness.

\begin{rem}
    \label{rem:noise}
    In practical applications, system \eqref{eq:sys} could be influenced by process noise $w$ and measurement noise $v$, i.e.,
    \[
    \dot x(t) = f(x(t),u(t)) + w(t), \quad y(t) = h(x(t)) + v(t).
    \]
    Assume that for every compact interval $\ms T\subset\bb R_+$
    \[
    \mathrm{ess}\sup_{t\in\ms T}\|w(t)\|_\infty \le \bar w, \quad \mathrm{ess}\sup_{t\in\ms T}\|v(t)\|_\infty \le \bar v
    \]
    for some known $\bar w\ge 0$ and $\bar v\ge 0$. Furthermore, assuming the activation functions are continuously differentiable, implying 
    $
    \sup_{x \in \ms X, t \ge 0} \left\|\frac{\partial \hat{\mc T}}{\partial x}(x, t)\right\| \le \ell_\text{enc}
    $
    for some $\ell_\text{enc}>0$.
    Then, a straightforward corollary to \cref{prop:error-bound} is the following asymptotic bound on the estimation error
    \[
    \limsup_{t \to \infty} \|\xi(t)\| \le \epsilon_\text{dec} + \ell_\text{dec} \bigg( \epsilon_\text{pde} \frac{\kappa}{\lambda} + \epsilon_\text{enc} + \frac{\kappa}{\lambda} \ell_\text{enc} \bar w + \frac{\kappa \|B\|}{\lambda} \bar v \bigg).
    \]
    One arrives at the above bound by following similar lines of argument as in the proof of \cref{prop:error-bound}. \hfill $\diamond$
\end{rem}

\section{Numerical Validation}
\label{sec:experiments}
\begin{figure*}[!t]
    \centering
    \begin{subfigure}[t]{0.58\textwidth}
        \centering
        \includegraphics[width=\textwidth]{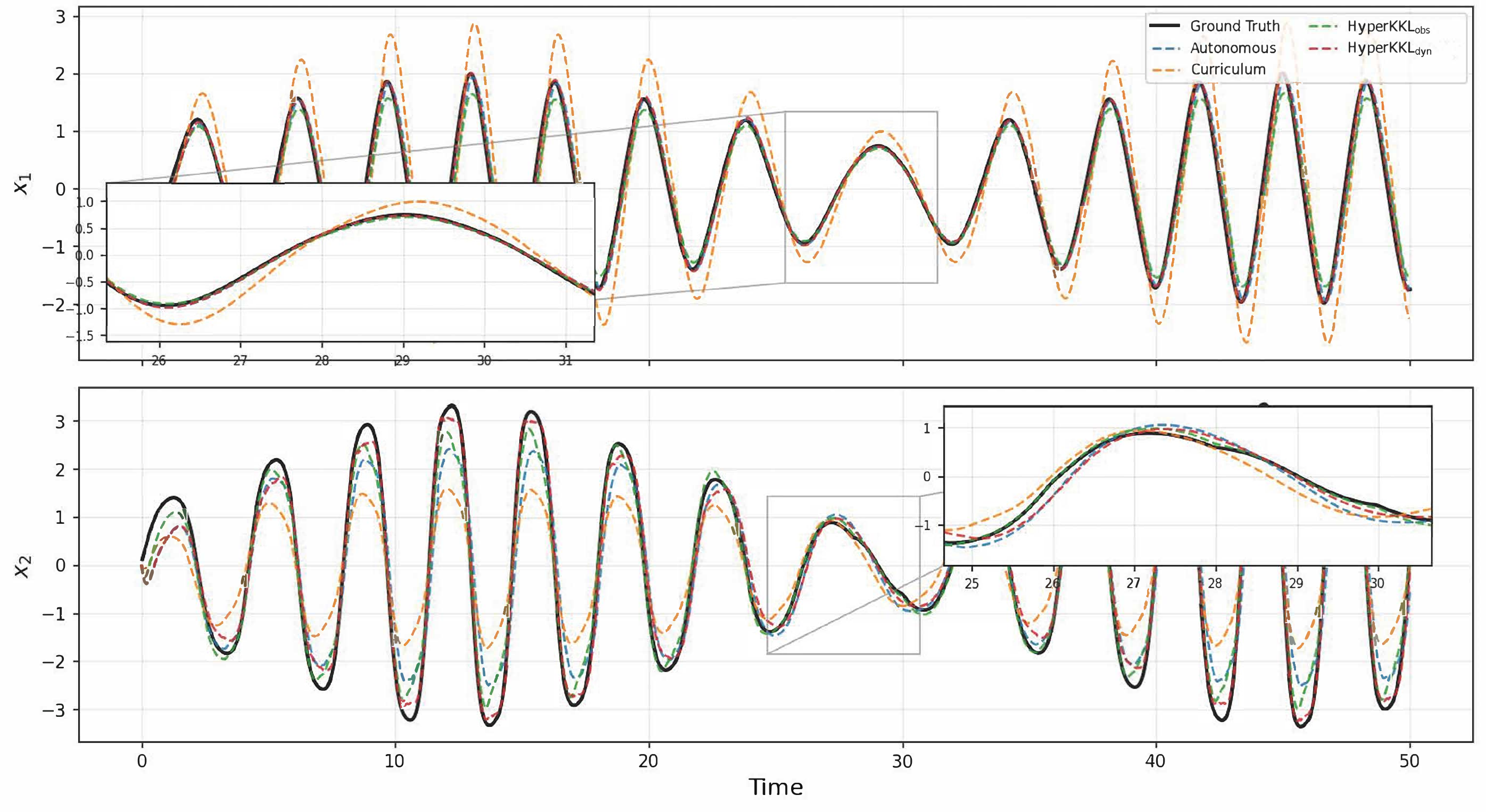}
        \caption{State estimation under square wave input for the Duffing oscillator.}
        \label{fig:timeseries-duffing}
    \end{subfigure}
    \hfill
    \begin{subfigure}[t]{0.38\textwidth}
        \centering
        \includegraphics[width=\textwidth]{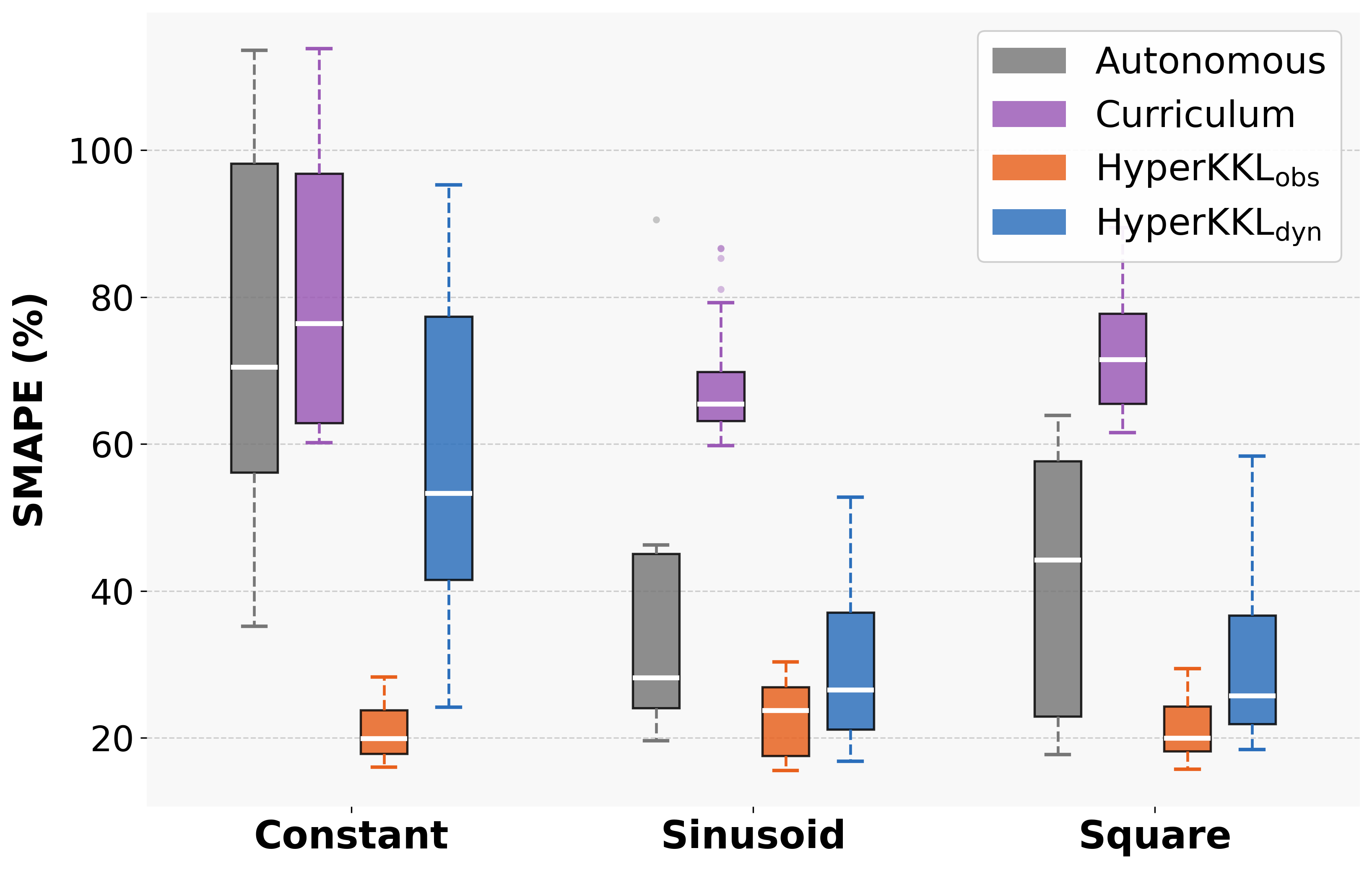}
        \caption{SMAPE: Duffing oscillator.}
        \label{fig:box-duffing}
    \end{subfigure}
 
    \vspace{6pt}
 
    \begin{subfigure}[t]{0.32\textwidth}
        \centering
        \includegraphics[width=\textwidth]{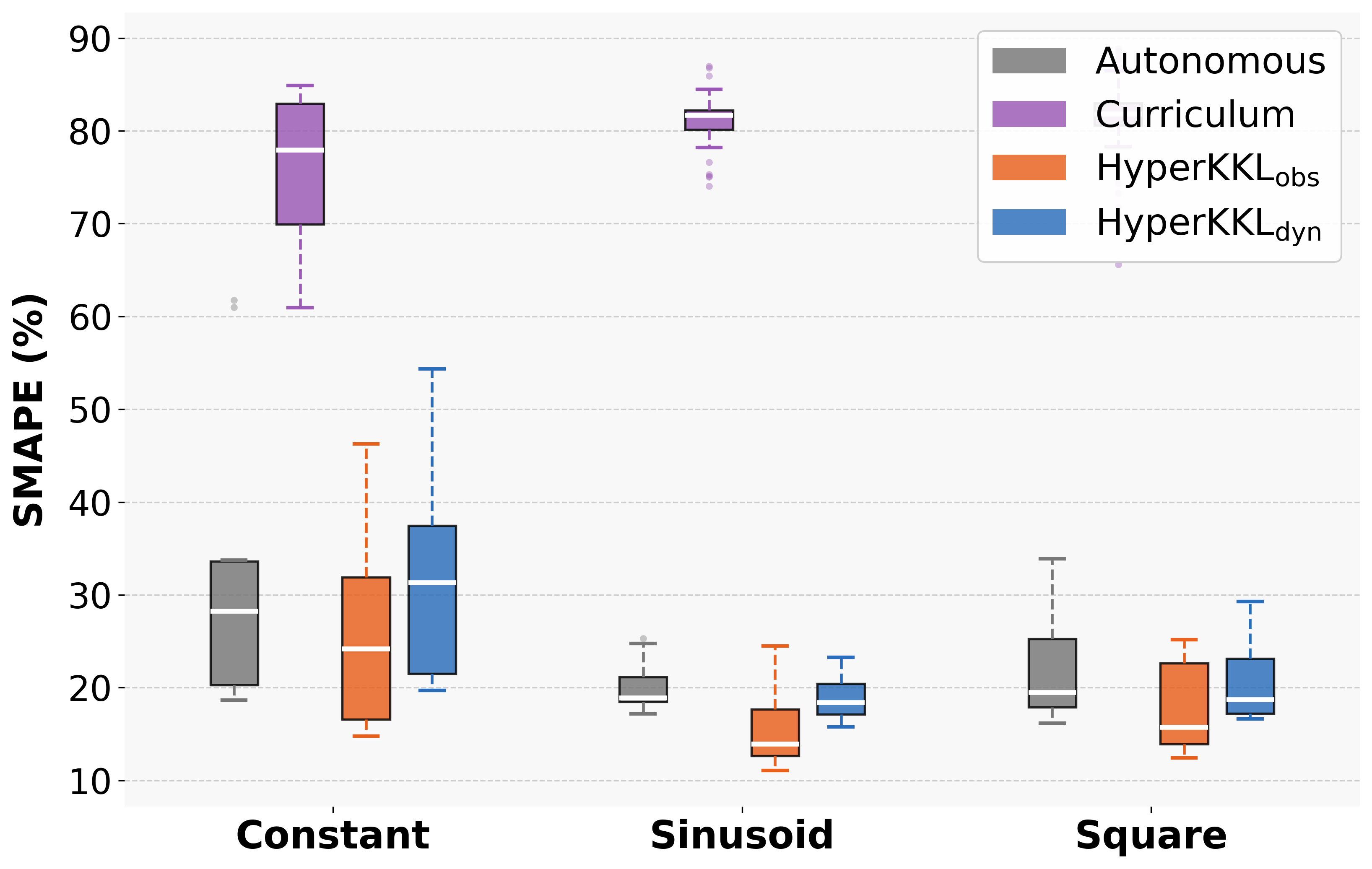}
        \caption{SMAPE: Van der Pol.}
        \label{fig:box-vdp}
    \end{subfigure}
    \hfill
    \begin{subfigure}[t]{0.32\textwidth}
        \centering
        \includegraphics[width=\textwidth]{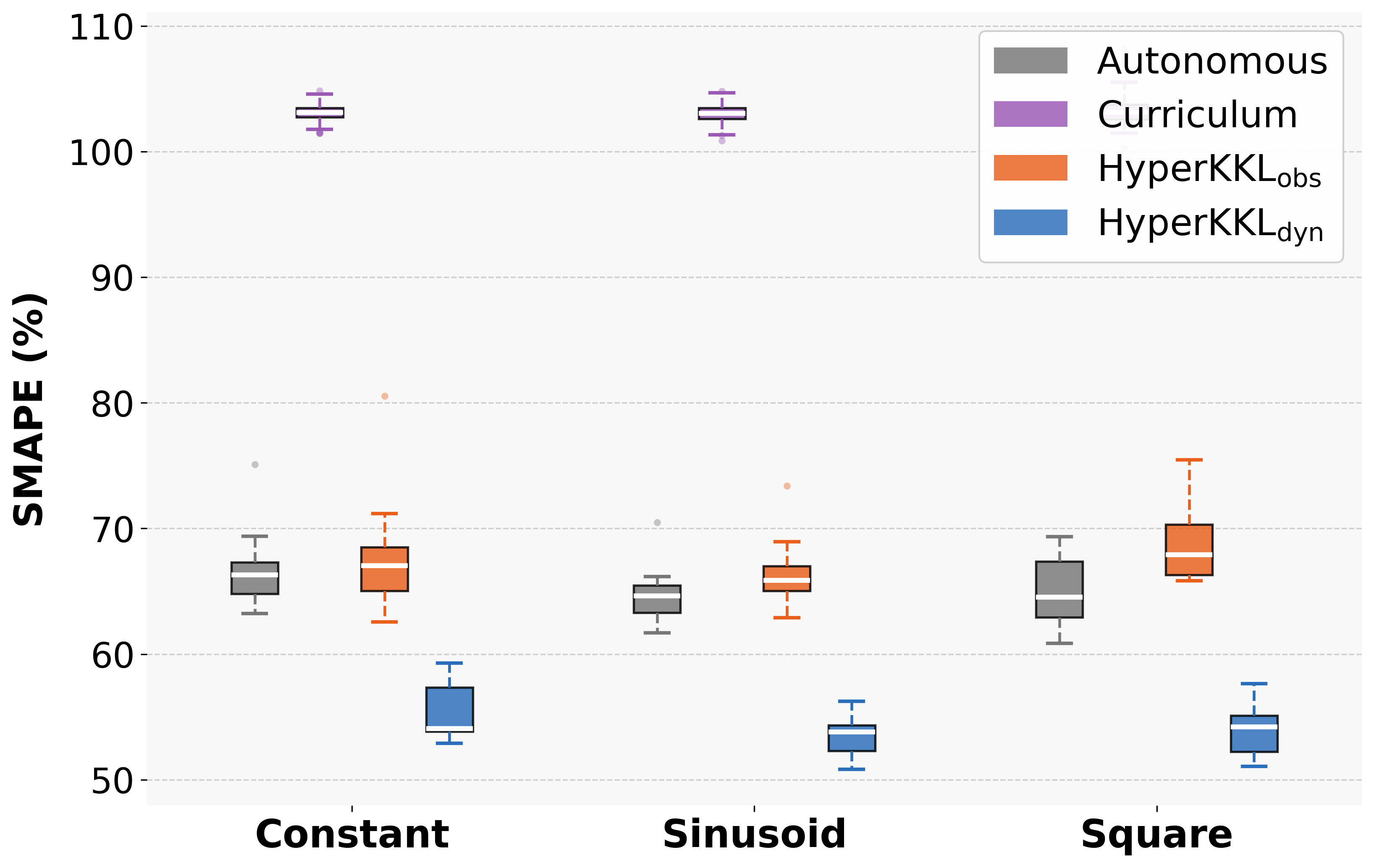}
        \caption{SMAPE: R\"{o}ssler system.}
        \label{fig:box-rossler}
    \end{subfigure}
    \hfill
    \begin{subfigure}[t]{0.32\textwidth}
        \centering
        \includegraphics[width=\textwidth]{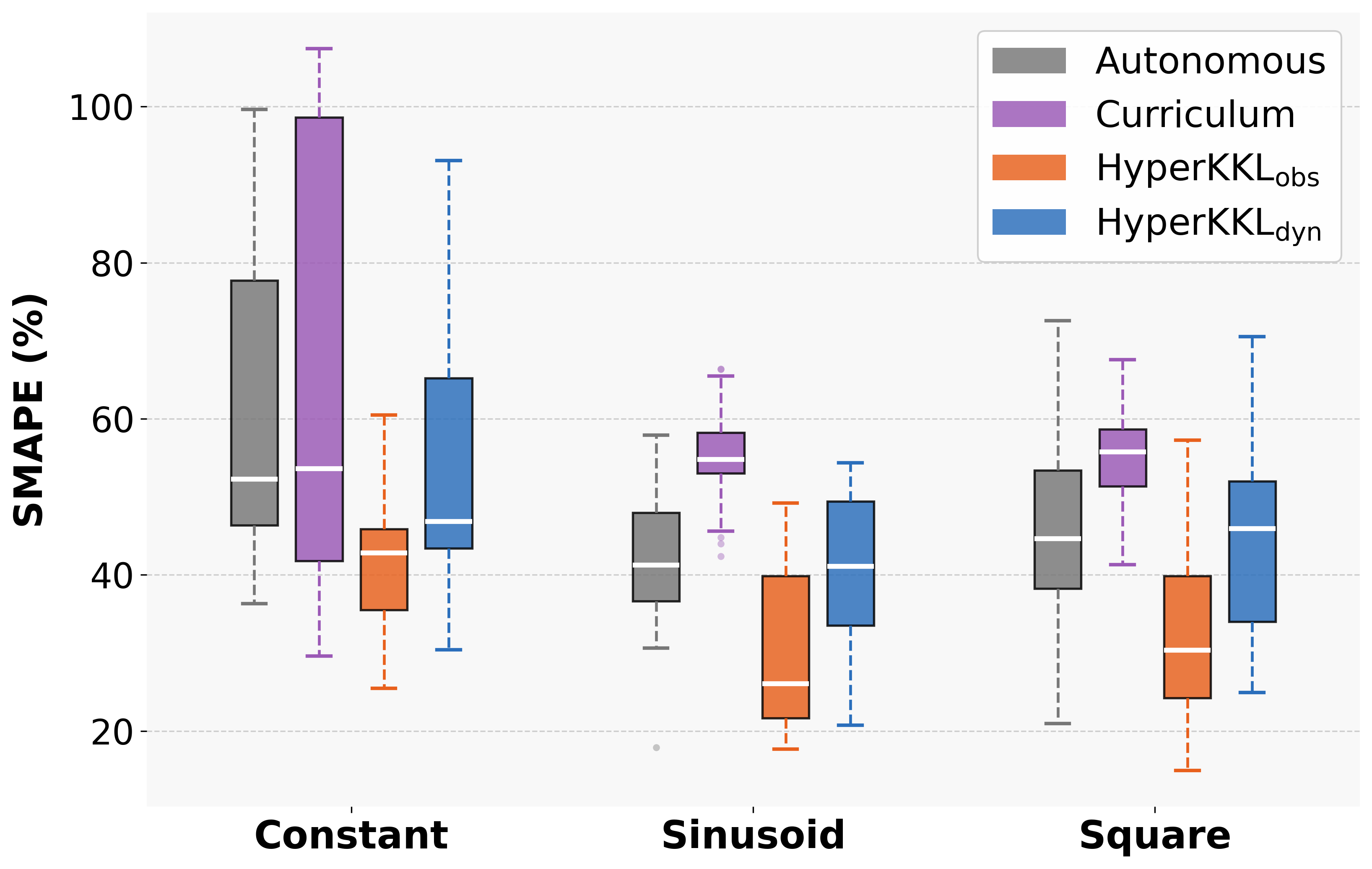}
        \caption{SMAPE: FitzHugh--Nagumo.}
        \label{fig:box-fhn}
    \end{subfigure}
 
    \caption{Estimation results across all benchmark systems over 100 trials with random initial conditions. (a)~$\text{HyperKKL}_\text{obs}$ tracks the ground truth across input transitions, while the Autonomous baseline accumulates persistent phase drift. (b)--(e)~SMAPE distributions confirm statistically robust improvements.}
    \label{fig:all-results}
\end{figure*}
 
We evaluate both HyperKKL variants on four benchmark systems. We use RK45 ODE solver with $\Delta t = 0.05$\,s over $T = 50$\,s. 
Process and measurement noise (Gaussian, $\sigma^2 = 0.01$) are added at evaluation. 
Results are averaged over 100 trials with random initial conditions, and we report the Symmetric Mean Absolute Percentage Error (SMAPE) for its scale-invariance. Code is available at \url{https://github.com/yehias21/HyperKKL}.
 
\subsection{Nonlinear Systems}
We consider the following systems for evaluation:
\begin{itemize}
    \item \textit{Duffing:} $\dot{x}_{1} = x_{2}$, $\dot{x}_{2} = -x_{1} - x_{1}^{3} + u$, $y=x_1$.
    \item \textit{Van der Pol:} $\dot{x}_{1} = x_{2}$, $\dot{x}_{2} = (1 - x_{1}^{2})x_{2} - x_{1} + u$, $y = x_1$.
    \item \textit{R\"{o}ssler:} $\dot{x}_{1} = -(x_{2} + x_{3})$, $\dot{x}_{2} = x_{1} + 0.1x_{2} + u$, $\dot{x}_{3} = 0.1 + x_{3}(x_{1} - 14)$, $y = x_2$.
    \item \textit{FitzHugh--Nagumo:} $\dot{x}_{1} = 10(x_{1} - x_{1}^{3} - x_{2}) + u$, $\dot{x}_{2} = 1.5x_{1} - x_{2} + 0.8$, $y = x_1$.
\end{itemize}
Each system is tested under four input regimes: zero ($u\equiv 0$), constant ($u\sim\mathcal{U}[-1,1]$), sinusoidal, and square wave.
 
\subsection{Baselines and Implementation}
 
We compare the proposed $\text{HyperKKL}_\text{obs}$ and $\text{HyperKKL}_\text{dyn}$ against: \textit{(i)} the \emph{autonomous observer}, which is trained on unforced dynamics using \cite{niazi2023}, and \textit{(ii)} the \emph{curriculum learning} baseline \cite{bengio2009}, which fine-tunes the pretrained maps on non-autonomous data in stages of increasing complexity, without modifying the network architecture, thereby isolating the effect of data diversity from that of architectural changes.
 
The base encoder and decoder are 3-layer MLPs (150 hidden units for Duffing and Van der Pol oscillators, 350 for R\"{o}ssler and FitzHugh-Nagumo systems). For $\text{HyperKKL}_\text{obs}$, the injection network $\phi_\varpi$ is a 2-layer MLP (64 units) mapping the GRU context to $\Phi_\varpi \in \bb R^{n_z\times n_u}$, whereas the base maps remain frozen. For $\text{HyperKKL}_\text{dyn}$, a per-layer LoRA hypernetwork uses a shared GRU (64 units, $\omega = 100$ steps) producing a context vector concatenated with per-layer embeddings (dim.\ 16), passed through a backbone MLP. Separate heads predict low-rank factors $F_\ell \in \bb R^{d_{\rm in} \times r}$, $E_\ell \in \bb R^{r \times d_{\rm out}}$ with $r = 4$, so $\Delta W_\ell = \alpha F_\ell E_\ell$ ($\alpha$ initialized at $0.01$); biases are frozen, guaranteeing exact collapse to the autonomous observer when $u \equiv 0$. Training uses Adam ($\text{lr} = 10^{-4}$) with gradient clipping. The observer matrices are $A = -\text{diag}(1, 2, \dots, n_z)$, $B = \mb 1_{n_z}$, $n_z = n_y(2n_x + 1)$. $\text{HyperKKL}_\text{obs}$ adds ${\sim}10$K parameters; $\text{HyperKKL}_\text{dyn}$ adds ${\sim}45$K; all base parameters remain frozen.
 
\subsection{Results and Discussion}

\begin{table}[h]
    \centering
    \caption{State estimation SMAPE averaged over 100 trials. Lower is better. \textbf{Bold}: best per column. $\downarrow$: improvement over Autonomous.}
    \label{tab:benchmark-results}
    \vspace{4pt}
    \newcommand{\down}{{\scriptsize$\downarrow$}}
    \newcommand{\best}[1]{\textbf{#1}}
    \resizebox{\columnwidth}{!}{
\begin{tabular}{l cccc cccc}
    \toprule
    & \multicolumn{4}{c}{\textbf{Duffing}} & \multicolumn{4}{c}{\textbf{Van der Pol}} \\
    \cmidrule(lr){2-5} \cmidrule(lr){6-9}
    \textbf{Method} & Zero & Const & Sin & Sqr & Zero & Const & Sin & Sqr \\
    \midrule
        Autonomous                     & 19.3 & 74.6 & 37.1 & 40.7 & 17.7 & 32.6 & 20.2 & 22.2 \\
        Curriculum                     & 33.0 & 80.5 & 67.3 & 71.9 & 51.4 & 76.3 & 81.2 & 81.4 \\
        \addlinespace
        \rowcolor{blue!8}
        HyperKKL$_{\text{obs}}$        & \best{5.6}\down & \best{21.1}\down & \best{22.7}\down & \best{21.3}\down & 5.3\down & \best{26.0}\down & \best{15.5}\down & \best{17.9}\down \\
        \rowcolor{blue!8}
        HyperKKL$_{\text{dyn}}$        & 8.2\down & 57.3\down & 29.7\down & 30.7\down & \best{5.0}\down & 32.7 & 18.8\down & 20.8\down \\
    \midrule\midrule
    & \multicolumn{4}{c}{\textbf{R\"{o}ssler}} & \multicolumn{4}{c}{\textbf{FitzHugh--Nagumo}} \\
    \cmidrule(lr){2-5} \cmidrule(lr){6-9}
    \textbf{Method} & Zero & Const & Sin & Sqr & Zero & Const & Sin & Sqr \\
    \midrule
        Autonomous                     & 64.9 & 66.9 & 64.8 & 64.9 & \best{9.8} & 61.4 & 41.3 & 45.0 \\
        Curriculum                     & 98 & 103.1 & 103.0 & 103.3 & 59.6 & 67.0 & 55.6 & 54.9 \\
        \addlinespace
        \rowcolor{blue!8}
        HyperKKL$_{\text{obs}}$        & 64.4\down & 68.0 & 66.5 & 68.8 & 9.9 & \best{42.1}\down & \best{30.9}\down & \best{32.5}\down \\
        \rowcolor{blue!8}
        HyperKKL$_{\text{dyn}}$        & \best{52.7}\down & \best{55.4}\down & \best{53.5}\down & \best{54.0}\down & 17.0 & 54.6\down & 40.8\down & 44.5\down \\
    \bottomrule
\end{tabular}
}
\end{table}
 
We report the SMAPE in \cref{tab:benchmark-results} and discuss the results in light of the two motivating questions posed in the introduction.
 
\emph{Q1: Is input conditioning necessary, and which strategy is more effective?}

Both $\text{HyperKKL}_\text{obs}$ and $\text{HyperKKL}_\text{dyn}$ consistently reduce SMAPE relative to the autonomous baseline under non-zero forcing. 
Under zero input, both variants recover the autonomous performance.
The boxplots for 100 trials (\cref{fig:box-duffing}--\ref{fig:box-fhn}) confirm that these gains are statistically consistent.

$\text{HyperKKL}_\text{obs}$ achieves the best or tied-best SMAPE in 13 out of 16 system/input combinations, while $\text{HyperKKL}_\text{dyn}$ is best in only 2. On the Duffing oscillator under constant forcing, $\text{HyperKKL}_\text{obs}$ reduces SMAPE from 0.569 to 0.079 (86\% reduction), compared to 0.387 for $\text{HyperKKL}_\text{dyn}$.
On the R\"{o}ssler system under sinusoidal forcing, neither variant improves over the autonomous baseline. The chaotic attractor's sensitivity, combined with the higher latent dimension, makes input conditioning harder and may require longer input windows or richer architectures.

We attribute the better performance of $\text{HyperKKL}_\text{obs}$ over $\text{HyperKKL}_\text{dyn}$ to the optimization process rather than a representational advantage. 
$\text{HyperKKL}_\text{obs}$ decouples the injection of the input signal from the modulation of the neural KKL observer. 
That is, the GRU processes the input history into a compact embedding $\ell$, and the injection network $\phi_\omega$ maps $\ell$ to a correction in the latent dynamics without interfering with the pretrained maps $\hat{\mathcal T}$ and $\hat{\mathcal T}^*$. 
Input processing and observer conditioning are thus learned independently in $\text{HyperKKL}_\text{obs}$, which yields a simpler optimization landscape. 
In $\text{HyperKKL}_\text{dyn}$, these two tasks are entangled, as the hypernetwork must simultaneously interpret the input and reshape all network weights, leading to a harder joint optimization.

The curriculum-learning baseline, i.e., the data-only paradigm, consistently degrades estimation performance. 
This demonstrates that the limitation of autonomous KKL observers under forcing is not a data problem but an architectural one. 
Exposing the static maps $\hat{\mc T}(x), \hat{\mc T}^*(z)$ to richer training data harms performance by corrupting the autonomous embedding, because the architecture itself cannot represent the time-varying solution of PDE \eqref{eq:pde}. Dynamic weights via input conditioning are a necessary architectural ingredient.

It is important to discuss why the autonomous observer still provides reasonable estimates under moderate inputs. From \cite{bernard2018}, an input $u(t)$ applied to a neural KKL observer, which was learned under $u\equiv 0$, can be interpreted as an unmodeled disturbance. The Hurwitz stability of the latent dynamics provides inherent robustness. 
Thus, a simple input induces a bounded, persistent estimation error rather than divergence. 
This explains the degraded but stable behavior of the autonomous baseline.
While the autonomous observer tolerates the input as noise, the HyperKKL variants actively incorporate it, converting an unmodeled disturbance into structured information that reduces the estimation error.

\emph{Q2: Why are hypernetworks the appropriate mechanism?}

Input conditioning in neural KKL observers enables generalization across a continuous family of input signals (constants, sinusoids, square waves) using a single trained model, without retraining for each new realization. 
As analyzed in \cite{chauhan2024}, data-conditioned hypernetworks generate weights based on input characteristics. This enables the target network to dynamically adjust its behavior with better generalization to unseen data.
In parametric dynamical systems, \cite{vlachas2025} demonstrated that hypernetworks can interpolate in the space of models rather than observations. This captures diverse system behaviors across parameter regimes without per-instance retraining.
In our setting, each input realization induces distinct effective dynamics and an associated optimal transformation map and its inverse.
The hypernetwork learns a smooth mapping from input history to neural network parameters, enabling adaptation across $\mf U$. 
During inference, a new input requires only a forward pass through the GRU, with no retraining needed. 
The results confirm that both variants achieve consistent improvements across four qualitatively different input regimes that were not individually optimized for, demonstrating a general input-to-observer mapping.

\section{Conclusion}

We presented a hypernetwork-based framework (HyperKKL) for designing KKL observers for controlled nonlinear systems. We proposed two methods: $\text{HyperKKL}_\text{obs}$, which learns static transformation mappings via encoder-decoder networks and employs input injection in the observer dynamics, and $\text{HyperKKL}_\text{dyn}$, which learns dynamic transformation mappings via weight conditioning of encoder and decoder by a hypernetwork. 
We derived worst-case error bounds that explicitly relate the encoder-decoder approximation errors, the PDE residual, and the decoder's Lipschitz constant.
Through numerical experiments, we demonstrated that input conditioning is necessary for learning the transformation map and its inverse for a KKL observer. 
Future work will investigate spectral normalization to bound the decoder's Lipschitz constant, hybrid architectures that dynamically condition only the decoder, and extensions to unmeasured disturbances via a disturbance estimation module.


\bibliographystyle{IEEEtran}
\bibliography{biblio}
\end{document}